\newtheorem{theorem}{Theorem}[section]
\newtheorem{lemma}[theorem]{Lemma}
 \newtheorem{corollary}[theorem]{Corollary}
\theoremstyle{definition}
 \newtheorem{definition}[theorem]{Definition}
\newtheorem{example}[theorem]{Example}
\newtheorem{remark}[theorem]{Remark}
\newtheorem{observation}[theorem]{Observation}
\newcommand{\np}{\textsf{NP}}
\newcommand{\pls}{\mathsf{PLS}}
\newcommand{\agents}{\mathcal{N}}
\newcommand{\items}{\mathcal{M}}
\newcommand{\M}{\mathcal{M}}
\newcommand{\I}{\mathcal{I}}
\newcommand{\valus}{\mathcal{V}}
\newcommand{\alg}{\mathsf{ALG}}
\newcommand{\ef}{\mathsf{EF}}
\newcommand{\efx}{\mathsf{EFX}}
\newcommand{\mms}{\mathsf{MMS}}
\newcommand{\eefx}{\mathsf{EEFX}}
\newcommand{\prop}{\mathsf{PROP1}}
\newcommand{\apx}{\mathsf{APX}}
\newcommand{\idefx}{\mathsf{ID\text{-}EFX}}
\newcommand{\ideefx}{\mathsf{ID\text{-}EEFX}}
\newcommand{\todo}[1]{\textcolor{red}{TODO: #1}}
\definecolor{mygreen}{RGB}{20,140,80}
\definecolor{mylightgray}{RGB}{230,230,230}
\definecolor{mygreen}{RGB}{20,140,80}
\definecolor{mydarkgray}{gray}{0.15} 
\definecolor{oceanblue}{HTML}{2c55c2}
\title{Epistemic EFX Allocations Exist for Monotone Valuations}
\author{%
    Hannaneh Akrami\\
   Max Planck Institute for Informatics\\
   Graduiertenschule Informatik, 
   Universit\"at des Saarlandes\\
   \texttt{hakrami@mpi-inf.mpg.de} \\
   \And
   Nidhi Rathi\\
   Max Planck Institute for Informatics\\
   \texttt{nrathi@mpi-inf.mpg.de}
}
\begin{document}

\maketitle
\begin{abstract}
    We study the fundamental problem of \emph{fairly} dividing a set of indivisible items among agents with (general) monotone valuations. The notion of \emph{envy-freeness up to any item} ($\efx$) is considered to be one of the most fascinating fairness concepts in this line of work. Unfortunately, despite significant efforts, existence of $\efx$ allocations is a major open problem in fair division, thereby making the study of approximations and relaxations of $\efx$ a natural line of research. Recently, \cite{Caragiannis2023} introduced a promising relaxation of $\efx$, called \emph{epistemic} $\efx$ ($\eefx$). We say an allocation to be $\eefx$ if, for every agent, it is possible to shuffle the items in the remaining bundles so that she becomes ``$\efx$-satisfied''. \cite{Caragiannis2023} prove existence and polynomial-time computability of $\eefx$ allocations for additive valuations. A natural question asks what happens when we consider valuations more general than additive?

    We address this important open question and answer it affirmatively by establishing the \emph{existence} of $\eefx$ allocations for an arbitrary number of agents with general \emph{monotone} valuations.  To the best of our knowledge, $\eefx$ is the only known relaxation of $\efx$ (beside $\ef1$) to have such strong existential guarantees. Furthermore, we complement our existential result by proving computational and information-theoretic lower bounds. We prove that even for an arbitrary number of (more than one) agents with identical submodular valuations, it is $\pls$-hard to compute $\eefx$ allocations and it requires exponentially-many value queries to do so.
\end{abstract}

\section{Introduction}\label{sec:intro}

The theory of fair division addresses the fundamental problem dividing a set of resources in a \emph{fair} manner among individuals (often called as agents) with varied preferences. This problem arises naturally in many real-world settings, such as division of inheritance, dissolution of business partnerships, divorce settlements, assigning computational resources in a cloud computing environment, course assignments, allocation of radio and television spectrum, air traffic management, course assignments, to name a few \citep{etkin2007spectrum,moulin2004fair,vossen2002fair,budish2012multi,pratt1990fair}. Although the roots of fair division can be found in  antiquity, for instance, in ancient Greek mythology and the Bible, its first mathematical exposition dates back to the seminal work of  Steinhaus, Banach, and Knaster~\citep{steinhaus1948problem}. Since then, the theory of fair division has received significant attention and a flourishing flow of research from areas across economics, social science, mathematics, and computer science; see \cite{survey2022,brams1996fair,brandt2016handbook,robertson1998cake} for excellent expositions. 

 The development of fair division protocols plays a crucial role in ensuring equitable outcomes in the design of many social institutions. With the advent of internet, the necessity of having division rules that are both transparent and agreeable or, in other words, \emph{fair} has become evident \citep{moulin2019fair}. There are many examples to see how the principles of fair division are being applied in various technological platforms today~\citep{AdjustedWinner,spliddit}.

   \begin{table}
    \caption{The additive valuation functions of $3$ agents for $7$ goods.}\label{table}
    \begin{center}
        \begin{tabular}{ c | c c c c c c c c} 
                  & $g_1$ & $g_2$ & $g_3$ & $g_4$ & $g_5$ & $g_6$ & $g_7$ \\ \midrule
	        $v_1$ & $100$ & $100$ & $100$ & $1$ & $1$ & $1$ & $1$\\  
 	        $v_2$ & $1$ & $1$ & $1$ & $100$ & $100$ & $100$ & $1$\\  
            $v_3$ & $1$ & $50$ & $50$ & $1$ & $1$ & $1$ & $55$
        \end{tabular}
    \end{center}
    \end{table}

Some of the central solution concepts and axiomatic characterizations in the fair-division literature stem from the cake-cutting context~\citep{moulin2004fair} where the resource to be divided is considered to be a (divisible) cake $[0,1]$. The quintessential notion of fairness---\emph{envy-freeness}---was also mathematically formalized in this setup \citep{foley1967resource, varian1973equity}. We say an allocation is $\emph{envy-free}$ if every agent prefers their share in the division at least as much as any other agent's share. Strong existential guarantees of envy-free cake division that also establishes a connection with topology~\citep{stromquist1980cut,edward1999rental}, has undeniably made envy-freeness as the representative notion of fairness in resource-allocation settings.  Unfortunately, an envy-free allocation is not guaranteed to exist when we need to fairly divide a set of indivisible items: consider two agents and a single item: only one agent can get the item, and the other player will be envious. Furthermore, it is $\np$-hard to decide whether an envy-free allocation exists e.g., see~\cite{bouveret2016characterizing}. Infeasibility along-with high computational complexity of envy-free allocations has led to study of its various relaxations for discrete setting. 

In this paper, we consider the setting where the resource is a set of discrete or indivisible items, each of which must be wholly allocated to a single agent. A fair division instance consists of a set $\agents = \{1,2 \dots, n\}$ of $n$ agents and a set $\items$ of items. Every agent $i$ specifies her preferences via a valuation function  $v_i \colon 2^{\items} \to \mathbb{R}$. We study general \emph{monotone valuations} that pertains adding a good to a bundle cannot make it worse. The goal is to find a partition $X = (X_1, \dots, X_n)$ of the items where every agent $i \in \agents$ upon receiving bundle $X_i$ considers $X$ to be $\emph{fair}$. 

{\bf \boldmath Envy-freeness up to any item ($\efx$)}: One of the most compelling notions of fairness for discrete setting is \emph{envy-freeness up to any item} ($\efx$). This notion was introduced by \cite{caragiannis2016unreasonable}. We say an allocation is $\efx$ if every agent prefers her own bundle to the bundle of any other agent, after removing \emph{any} item from the latter. $\efx$ is considered to be the ``closest analogue of envy-freeness'' for discrete setting~\citep{caragiannis2019envy}. Unfortunately, despite significant efforts over the past few years, existence of $\efx$ allocations remain as the biggest and the most challenging open problem in fair division, even for instances with more than three agents with additive valuations~\citep{procaccia2020technical}. 
.


{\bf \boldmath Epistemic envy-freeness up to any item ($\eefx$):} A recent work of \cite{Caragiannis2023} introduced a promising relaxation of $\efx$, called as \emph{epistemic} $\efx$ (which adapts
the concepts of \emph{epistemic envy-freeness} defined by \cite{ABCGL18}). We call an allocation $X$ as  $\eefx$ if for every agent $i \in [n]$, there exists an allocation $Y$ such that $Y_i=X_i$ and for every bundle $Y_j \in Y$, we have $v_i(X_i) \geq v_i(Y_j \setminus g)$ for every $g \in Y_j$. That is, an allocation is $\eefx$ if, for every agent, it is possible to shuffle the items in the remaining bundles so that she becomes ``$\efx$-\emph{satisfied}''. See Example \ref{ex:efxvseefx} for a better intuition. 

\begin{example}\label{ex:efxvseefx}
    Consider a fair division instance consisting of $7$ items and $3$ agents with additive valuations as described in Table \ref{table}. Now consider the allocation $X$ where $X_1 =\{g_1, g_2, g_4\}, X_2=\{g_3, g_5, g_6\},$ and $X_3=\{g_7\}$. Note that $X$ is envy-free, and hence, $\efx$ and $\eefx$. Now assume that agent $1$ and $2$ exchange the items $g_3$ and $g_4$. Formally, let $Y = (\{g_1, g_2, g_3\}, \{g_4, g_5, g_6\}, \{g_7\})$. For $i \in \{1,2\}$, have $v_i(Y_i) = 300 > 201 = v_i(X_i)$, and $v_3(Y_3)=v_3(X_3)$. Therefore, intuitively it seems that $Y$ is a better allocation compared to $X$ since agents $1$ and $2$ are strictly better off and agent $3$ is as happy as before (i.e., $Y$ Pareto dominates $X$). However, note that while allocation $Y$ is still $\eefx$, it is not $\efx$. Namely, agent $3$ strongly envies agent $1$: $v_3(Y_1 \setminus \{g_1\}) = 100 > 55 = v_3(Y_3)$.
\end{example}
\cite{Caragiannis2023} establish existence and polynomial-time computability of $\eefx$ allocations for an arbitrary number of agents with a restricted class of \emph{additive} valuations. Thus, the following question naturally arises:

\smallskip
\begin{mdframed}
    Do $\eefx$ allocations exist for an arbitrary number of agents with general $\emph{monotone}$ valuations?
\end{mdframed}
\smallskip

\subsection{Our Results}
We answer the above question in the affirmative and establish computational hardness and information-theoretic lower bounds for finding $\eefx$ allocations:
\begin{enumerate}
    \item $\eefx$ allocations are guaranteed to \emph{exist} for any fair division instance with an arbitrary number of agents having general \emph{monotone} valuations; see Theorem~\ref{thm:eefx}.
    \item An exponential number of valuation queries is required by any deterministic algorithm to compute an $\eefx$ allocation for fair division instances with an arbitrary number of agents with identical submodular valuations; see Theorem~\ref{thm:exp}.
    \item The problem of computing $\eefx$ allocations for fair division instances with an arbitrary number of agents having identical submodular valuations is $\pls$-hard; see Theorem~\ref{thm:pls}.
\end{enumerate}

It is relevant to note that, with the above results, the notion of $\emph{epsitemic}$-$\efx$ becomes the \emph{second} known relaxation of $\efx$ (beside $\ef1$), that admits such strong existential guarantees. Along-with its hardness results, the notion of $\eefx$ for discrete settings seem to enjoy results of \emph{similar} flavor as that of envy-freeness for cake division~\citep{stromquist1980cut,stromquist2008envy,deng2012algorithmic}. 

Similar computational hardness and information-theoretic lower bounds are known for computing an $\efx$ allocation between two agents with identical submodular valuations; see~\cite{plaut2020almost} and \cite{Goldberg2023}. We reduce our problem of computing an $\eefx$ allocation among an arbitrary number of agents with identical submodular valuations to the above computational problem for $\efx$. See Appendix \ref{app:pls} for further discussion on the $\pls$ class~\citep{johnson1988easy}.

Therefore, note that, similar computational hardness and information-theoretic bounds hold true for finding $\efx$ and $\eefx$ allocations. But, our work has proved a \emph{stark contrast} to $\efx$ by establishing guaranteed existence of $\eefx$ allocations for an arbitrary number of agents with monotone valuations, whereas existence of $\efx$ allocations for more than three agents even with additive valuations remain a major open problem.

\subsection{Our Techniques} \label{sec:technique}
We develop a novel technique to prove existence of $\eefx$ allocations for monotone valuations. In this section, we describe our technique in generality since which we believe it could be employed in other (fair division) problems as well. 

Consider a fair division instance\footnote{A fair division instance $\I = (\agents, \M, \valus)$ consists of a set $\agents$ of $n$ agents and a set $\valus$ consisting of agent-valuations over a set $\items$ of items.} $\I = (\agents, \M, \valus)$ and a \emph{desirable property} $\mathcal{P}$ of a bundle $B \subseteq \items$ for an agent $i \in \agents$. For example, in this work, we consider the fairness property of whether $B$ is $n$-\emph{epistemic}-$\efx$ for an agent $i$ (see Definition \ref{def:k-epistemic}). We say $B$ is \emph{desirable} to $i$ when $B$ satisfies the property $\mathcal{P}$ for agent $i$.
The goal is to find an allocation $A= (A_1, \ldots, A_n)$ such that $A_i$ is desirable to each agent $i \in \agents$; we call such an allocation as $\emph{desirable}$.

For any partitioning of the items into $n$ bundles $X_1, X_2, \ldots, X_n$, let us consider a bipartite graph $G(X)$ with one side representing the $n$ agents and the other side representing the $n$ bundles. There exists an edge $(i,j)$ between (the node corresponding to) agent $i$ and (the node corresponding to) bundle $X_j$, if and only if, bundle $X_j$ is \emph{desirable} to agent $i$. For any subset of the nodes $S \subseteq \agents$, let us write $N(S)$ to denote the set of all neighbours of $S$ in $G(X)$. 

Note that, if $G(X)$ has a perfect matching, then this matching translates to a \emph{desirable} allocation in $\I$. Therefore, let us assume that $G(X)$ does not admit a perfect matching and hence admits a Hall's violator set. That is, there exists a subset of agents $\{a_1, \ldots, a_{t+1}\}$ for which $N(\{a_1, \ldots, a_{t+1}\}) \leq t$. But also, there exists a subset of bundles $\{X_{j_1}, \ldots, X_{j_{k+1}}\}$ for which $N(\{X_{j_1}, \ldots, X_{j_{k+1}}\}) \leq k$. Let us assume that $\{X_{j_1}, \ldots, X_{j_{k+1}}\}$ is minimal. If $k \geq 1$, this means that we can find a non-empty matching of $((i_1, X_{j_1}), \ldots, (i_k, X_{j_k}))$ such that there exists no edge between agent $i \in \agents \setminus \{i_1, \ldots, i_k\}$ and  bundles $X_{j_1}, \ldots, X_{j_{k+1}}$. In other words, for all $\ell \in [k]$, $X_{j_\ell}$ is desirable to $i_\ell$ and is not desirable to any  $i \notin \{i_1, \ldots, i_k\}$. 


After finding such a matching, it is intuitive to allocate $X_{j_\ell}$ to $i_{\ell}$ for $\ell \in [k]$ and then recursively find a desired allocation of the remaining goods to the remaining agents. In order to do so, we need to ensure two important conditions.
\begin{enumerate}
    \item We can find a non-empty matching $((i_1, X_{j_1}), \ldots, (i_k, X_{j_k}))$ in each step.
    \item After removing $\{X_{j_1}, \ldots, X_{j_k}\}$ from $\M$, we can still find desirable bundles (with respect to the original instance) for the remaining agents.
\end{enumerate}

Whether ensuring these conditions is possible or not, depends on the property $P$. In this work, we prove this approach works when the property $P$ is $n$-epistemic-$\efx$, and thereby proving the existence of $\eefx$ allocations for monotone valuations. 

Although these two conditions might seem inconsequential, we prove that a stronger condition can simultaneously imply both of them. Namely, we only need to prove that at each step with $n'$ remaining agents, for any remaining agent $i$, we can partition the remaining items into $n'$ many bundles $X_1, \ldots, X_{n'}$ such that $X_j$ is desirable to $i$ for all $j \in [n']$. This way, at each step, we can ask one of the remaining agents to partition the remaining goods into $n'$ many desirable bundles with respect to her own valuation. Then, we either find a perfect matching, or we find a non-empty matching and reduce the size of the instance. 

A similar technique was also developed independently by \cite{bu2024fair} for finding $\prop$ allocations\footnote{$\prop$ requires each agent’s proportionality if one item is (hypothetically) added to that agent's bundle.} among agents with additive valuations in a \emph{comparison-based model}. Here, two bundles are presented to an agent and she responds by telling which bundle she prefers.

\subsection{Further Related Work}\label{sec:related}
\cite{plaut2020almost} proved the existence of $\efx$ for two agents with monotone valuations. For three agents, a series of works proved the existence of $\efx$ allocations when agents have additive valuations~\citep{chaudhury2020efx}, \emph{nice-cancelable} valuations~\citep{berger2021almost}, and finally when two agents have monotone valuations and one has an \emph{$\mms$-feasible} valuation~\citep{AkramiACGMM23}. $\efx$ allocations exist when agents have identical~\citep{plaut2020almost}, binary~\citep{halpern2020fair}, or bi-valued~\citep{amanatidis2021maximum} valuations.
Several approximations~\citep{chaudhury2021little,amanatidis2020multiple,chan2019maximin,farhadi2021almost} and relaxations~\citep{amanatidis2021maximum,caragiannis2019envy,berger2021almost,mahara2021extension,chasmjahan23,aram22,ef2x} of $\efx$ have become an important line of research in discrete fair division.


Another relaxation of envy-freeness proposed in discrete fair division literature is that of {\em envy-freeness up to some item} ($\ef$1), introduced by \cite{budish2011combinatorial}. It requires that each agent prefers her own bundle to the bundle of any other agent, after removing some item from the latter. $\ef$1 allocations always exist and can be computed efficiently~\citep{lipton2004approximately}.

\emph{Proportionality}~\citep{dubins1961cut,steinhaus1948problem} is another well-studied notion of fairness having its roots in cake division literature. We say an allocation is proportional if each agent gets a bundle of items for which her value exceeds her total value for all items divided by the number of agents. It is easy to see that a proportional division cannot exist for the setting of discrete items. 

Among the relaxations of proportionality, the one that has received the lion's share of attention uses the so-called {\em maximin fair share} ($\mms$), i.e., the maximum value an agent can attain in any allocation where she is assigned her least preferred bundle, as threshold. Surprisingly, \cite{kurokawa2016can} proved that $\mms$ allocations may not always exist. Since then, research has focused on computing allocations that approximate $\mms$; e.g., see~\cite{amanatidis2017approximation,kurokawa2018fair,ghodsi2018fair,barman2020approximation,garg2020improved,FST21,simple,akrami2024breaking} for additive,~\cite{barman2020approximation, ghodsi2018fair,uziahu2023fair} for submodular,~\cite{ghodsi2018fair,seddighin2022improved,MMS-XOS} for XOS, and~\cite{ghodsi2018fair, seddighin2022improved} for subadditive valuations. 

Proportionality up to one good (PROP1)~\citep{conitzer2017fair} is another relaxation of proportionality which can be guaranteed together with Pareto optimality~\citep{barman2019proximity}. Proportionality up to any good (PROPX) on the other hand, is not a feasible notion in the goods setting~\citep{aziz2020polynomial}.


An excellent recent survey by~\cite{survey2022} discusses the above fairness concepts and many more.
Another aspect of discrete fair division which has garnered an extensive research is when the items that needs to be divided are \emph{chores}. We refer the readers to the survey by~\cite{guo2023survey} for a comprehensive discussion.

\subsection{Organization:} We begin by discussing the preliminaries in Section~\ref{sec:prelim}. We prove our key result of guaranteed existence of $\eefx$ allocations for monotone valuations in Section~\ref{sec:good-eefx}. We conclude by proving information/theoretic lower bounds for computing an $\eefx$ allocation in Section~\ref{sec:hardness}. Towards the end, we discuss a list of many interesting open problems motivated by this work in Section~\ref{sec:conc}.
\section{Definitions and Notation} \label{sec:prelim}
For any positive integer $k$, we use $[k]$ to denote the set $\{1,2,\ldots,k\}$.
We denote a fair division instance by $\I = (\agents, \items, \valus)$, where $\agents=[n]$ is a set of $n$ agents, $\items$ is a set of $m$ items and $\valus=(v_1,v_2, \ldots, v_n)$ is a vector of valuation functions. For any agent $i \in \agents$, we write $v_i: 2^\items \rightarrow \mathbb{R}_{\geq 0}$ to denote her valuation function over the set of items.  For all $i \in \agents$, we assume $v_i$ is normalized; i.e., $v_i(\emptyset)=0$, and \emph{monotone}; i.e., for all $i \in \agents$, $g \in \items$ and $S \subset \items$, $v_i(S \cup \{g\}) \geq v_i(S)$.
Moreover, we say a valuation function $v$ is submodular if for all subsets $S,T \subseteq \items$, we have $v(S \cap T)+v(S \cup T) \leq v(S) + v(T)$ and we say $v$ is additive when $v(S) = \sum_{g\in S}v(g)$ for all subsets $S \subseteq \items$. Note that, monotone valuations are submodular and additive, making them the most general class of valuations.

For simplicity, we sometimes use $g$ instead of $\{g\}$ to denote an item $g \in \items$. We use ``items'' and ``goods'' interchangeably.


An allocation $X=(X_1, X_2, \ldots, X_n)$ of the items among agents is a partition of items into $n$ bundles such that bundle $X_i$ is allocated to agent $i$. That is, we have $X_i \cap X_j = \emptyset$ for all $i,j \in \agents$ and $\cup_{i \in [n]} X_i=\items$.



Let us now define the concept of \emph{strong envy} that characterizes one of most compelling notions of fairness in the literature - \emph{envy-freeness up to any item} ($\efx$).  
\begin{definition}[Strong Envy]
     For a fair division instance, we say an agent $i$ upon receiving a bundle $A \subseteq \items$ \emph{strongly envies} a bundle $B \in \items$, if there exists an item $g \in B$ such that $v_i(A) < v_i(B \setminus g)$. Under an allocation $X$, we say agent $i$ \emph{strongly envies} agent $j$, if upon receiving $X_i$, agent $i$ strongly envies the bundle $X_j$.
\end{definition}


\begin{definition}[$\efx$]
    For a fair division instance, an allocation $X=(X_1,X_2, \ldots, X_n)$ is said to be ``\emph{envy-free up to any item}'' or ``$\efx$'', if no agent strongly envies another agent. i.e., for all agents $i$ and $j$, $v_i(X_i) \geq v_i(X_j \setminus g)$ for all $g \in X_j$.
\end{definition}

Recently,~\cite{Caragiannis2023} introduced a promising new notion of fairness --- \emph{epistemic $\efx$} -- by relaxing $\efx$, that we define next. They proved \emph{epistemic $\efx$} allocations among an arbitrary number of agents with additive valuations can be computed in polynomial time.

\begin{definition}\label{def:k-epistemic}
    For any integer $k$, agent $i \in [n]$ and subset of items $S \subseteq \items$, we say that a bundle $A \in S$ is ``$k$\emph{-epistemic-$\efx$}'' for $i$ with respect to $S$, if there exists a partitioning of $S \setminus A$ into $k-1$ bundles $C_1, C_2, \ldots, C_{k-1}$, such that for all $j \in [k-1]$, upon receiving $A$, $i$ would not strongly envy $C_j$. We call $C=\{C_1, C_2, \ldots, C_{k-1}\}$ a ``$k$\emph{-certificate}'' of $A$ for $i$ under $S$. Also we define 
    \begin{align*}
        \eefx^k_i(S) = \{\text{$A \subseteq S \ \vert \ A$ is ``$k$-epistemic-$\efx$'' for agent $i$ with respect to }S\}.
    \end{align*}
\end{definition}

\begin{definition}[$\eefx$]
    For a fair division instance, an allocation $X=(X_1,X_2, \ldots, X_n)$ is said to be \emph{epistemic $\efx$} or \emph{$\eefx$} if for all agents $i$, $X_i \in \eefx^n_i(\items)$.
\end{definition}

Note that the set of $\efx$ and $\eefx$ allocations coincide for the case of two agents.
Next, we define a notion of $\eefx$\emph{-graph} that plays a crucial role in proving the existence of $\eefx$ allocations.

\begin{definition}
    For a fair division instance, consider a partition of $\items$ into $n$ bundles $Y_1, \ldots, Y_n$. We define the $\eefx$\emph{-graph} as an undirected bipartite graph $G=(V,E)$, where $V$ has one part consisting of $n$ nodes corresponding to the agents and another part with $n$ nodes corresponding to the bundles $Y_1, \ldots, Y_n$. There exists an edge $(i,j)$ between (the node corresponding to) agent $i$ and (the node corresponding to) bundle $Y_j$ if and only if $Y_j \in \eefx^n_i(\items)$.
\end{definition}

We abuse the notation and refer to the ``nodes corresponding to agents'' as ``agents'' and also refer to the ``nodes corresponding to bundles'' as ``bundles''. For any subsets $V$ of nodes, $N(V)$ is the set of all neighbors of the nodes in $V$. For a matching $M$, $V(M)$ is the set of vertices of $M$.
\section{Existence of Epistemic EFX Allocations}\label{sec:good-eefx}

In this section, we prove our main result that establishes existence of $\eefx$ allocations for any fair division instance with $n$ agents having monotone valuations. We start by proving an important structural property (in Lemma~\ref{lem:reduce}) that enables us to reduce an instance with lower number of agents. 

\begin{lemma}\label{lem:reduce}
    For any fair division instance, consider an agent $i \in \agents$ and $A \subseteq \items$ such that $A \notin \eefx^n_i(\items)$. Then for all bundles $B \in \eefx^{n-1}_i(\items \setminus A)$, we must have $B \in \eefx^n_i(\items)$.  
\end{lemma}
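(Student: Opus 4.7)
My plan is a short proof by contradiction that exploits the monotonicity of $v_i$. Suppose, for contradiction, that $B \in \eefx^{n-1}_i(\items \setminus A)$ but $B \notin \eefx^n_i(\items)$. Unpacking definitions, there is a partition $C_1, \ldots, C_{n-2}$ of $(\items \setminus A) \setminus B$ such that agent $i$, upon receiving $B$, does not strongly envy any $C_j$.

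The natural candidate certificate for $B$ with respect to $\items$ is the partition $\{A, C_1, \ldots, C_{n-2}\}$ of $\items \setminus B$ (which has exactly $n-1$ parts). Since $i$ already does not strongly envy any $C_j$ from bundle $B$, the only way this candidate can fail to witness $B \in \eefx^n_i(\items)$ is if $i$ (holding $B$) strongly envies $A$. So there must exist some $g^* \in A$ with $v_i(B) < v_i(A \setminus g^*)$.

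The main (and in fact only) obstacle is to show that this forces a contradiction with the assumption $A \notin \eefx^n_i(\items)$. I will do this by exhibiting $\{B, C_1, \ldots, C_{n-2}\}$ as a certificate that $A \in \eefx^n_i(\items)$. By monotonicity, $v_i(A) \geq v_i(A \setminus g^*) > v_i(B)$. Now, for any $g' \in B$, monotonicity gives $v_i(B) \geq v_i(B \setminus g')$, hence $v_i(A) > v_i(B \setminus g')$, so $i$ holding $A$ does not strongly envy $B$. Similarly, for each $j \in [n-2]$ and any $g'' \in C_j$, since $C_1, \ldots, C_{n-2}$ certifies $B \in \eefx^{n-1}_i(\items \setminus A)$, we have $v_i(B) \geq v_i(C_j \setminus g'')$, and combined with $v_i(A) > v_i(B)$ we obtain $v_i(A) > v_i(C_j \setminus g'')$; thus $i$ holding $A$ does not strongly envy $C_j$ either. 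Therefore $\{B, C_1, \ldots, C_{n-2}\}$ is an $n$-certificate of $A$ for $i$ under $\items$, contradicting $A \notin \eefx^n_i(\items)$. This contradiction forces $B \in \eefx^n_i(\items)$, finishing the proof.
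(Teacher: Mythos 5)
Your proof is correct and uses essentially the same ingredients as the paper's: the two candidate certificates $\{A, C_1,\ldots,C_{n-2}\}$ for $B$ and $\{B, C_1,\ldots,C_{n-2}\}$ for $A$, monotonicity, and the comparison of $v_i(A)$ with $v_i(B)$. The only difference is presentational — you run it as a single contradiction starting from $B \notin \eefx^n_i(\items)$, whereas the paper first rules out $v_i(A) \geq v_i(B)$ using $A \notin \eefx^n_i(\items)$ and then certifies $B$ directly.
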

\begin{proof}
    For an agent $i \in \agents$, let us assume that bundle  $A \subseteq \items$ is such that $A \notin \eefx^n_i(\items)$. Now, consider any bundle $B \in \eefx^{n-1}_i(\items \setminus A)$, i.e., there exists an  $(n-1)$-certificate of $B$ for $i$ under $\items \setminus A$, we call it $C=\{C_1, \ldots, C_{n-2}\}$.
    By definition, we have that 
    \begin{align} \label{eq:valid-red}
         v_i(B) \geq v_i(C_j \setminus g) \ \text{for all} \ j \in [n-2] \ \text{and} \ g \in C_j 
    \end{align}
   
     If $v_i(A) \geq v_i(B)$, then combining it with equation~(\ref{eq:valid-red}), we obtain $\{B, C_1, \ldots, C_{n-2}\}$ is an $n$-certificate of $A$ for $i$ under $\items$ and $A \in \eefx^n_i(\items)$, leading to a contradiction. Hence, we must have 
     \begin{align} \label{eq:valid-red2}
         v_i(B) \geq v_i(A)
     \end{align}
     Finally, combining equations~(\ref{eq:valid-red}) and (\ref{eq:valid-red2}), we obtain  $\{A, C_1, \ldots, C_{n-2}\}$ is an $n$-certificate of $B$ for $i$ under $\items$ and $B \in \eefx^n_i(\items)$. This completes our proof.
\end{proof}
Lemma \ref{lem:reduce} implies that if an agent $i$ finds a bundle $A$ to be $n$-epistemic-$\efx$ while no other agent finds $A$ to be $n$-epistemic-$\efx$, we can safely allocate $A$ to $i$, and remove $i$ and $A$ from the instance and find an $\eefx$ allocation of $\items \setminus A$ to the remaining $n-1$ agents. Note that we can repeat this process iteratively and remove $t \geq 1$ agents and $t$ bundles. The formal description is given in Corollary \ref{cor:reduce}.
\begin{corollary}[of Lemma \ref{lem:reduce}]\label{cor:reduce}
    For a fair division instance, consider a partial allocation $(X_{k+1}, X_{k+2}, \dots, X_n)$ to agents in the set $[n]\setminus[k]$. Let us assume that for all agents $i \in [n] \setminus [k]$ and all $j \in [k]$, we have $X_i \in \eefx^n_i$, $X_i \notin \eefx^n_j$. If $(X_1, \ldots, X_k)$ is an $\eefx$ allocation of $\items \setminus \bigcup_{\ell \in [k]}X_\ell$ for agents in $[k]$, then $(X_1, X_2, \ldots, X_n)$ is an $\eefx$ allocation for agents in $[n]$. 
\end{corollary}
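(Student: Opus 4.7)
The plan is to handle agents in $[n] \setminus [k]$ and in $[k]$ separately. For $i \in [n]\setminus[k]$, the requirement $X_i \in \eefx^n_i(\items)$ is assumed by hypothesis, so no verification is needed for those agents. The heart of the argument is to produce, for each agent $j \in [k]$, an $n$-certificate of $X_j$ under $\items$.

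Since $(X_1,\ldots,X_k)$ is $\eefx$ for the agents in $[k]$ over the remaining item set $\items' := \items \setminus \bigcup_{\ell \in [n]\setminus[k]} X_\ell$, for each $j \in [k]$ I can start from a $k$-certificate $\{C_1^j,\ldots,C_{k-1}^j\}$ of $X_j$ under $\items'$. My natural candidate for an $n$-certificate of $X_j$ under $\items$ is
\[
\mathcal{C}_j \;:=\; \{C_1^j,\ldots,C_{k-1}^j,\, X_{k+1},\, X_{k+2},\, \ldots,\, X_n\},
\]
which partitions $\items\setminus X_j$ into exactly $n-1$ bundles. The given $k$-certificate immediately handles non-strong-envy of $j$ towards each $C_\ell^j$. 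The main obstacle is to rule out that $j$ strongly envies any $X_\ell$ with $\ell\in[n]\setminus[k]$: the hypotheses only provide ``level-$n$'' information about those bundles (namely, $X_\ell\notin\eefx^n_j(\items)$), whereas the $k$-certificate lives at ``level $k$'', so this gap must be bridged.

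The bridging will be done by a contradiction argument with a carefully chosen maximizer. Suppose towards contradiction that $j$ strongly envies some $X_{\ell_0}$, i.e., $v_j(X_{\ell_0}\setminus g_0) > v_j(X_j)$ for some $g_0 \in X_{\ell_0}$. I would then pick $\ell^* \in \argmax_{\ell\in[n]\setminus[k]} v_j(X_\ell)$; by monotonicity, $v_j(X_{\ell^*}) \geq v_j(X_{\ell_0}) \geq v_j(X_{\ell_0}\setminus g_0) > v_j(X_j)$. My claim is then that the collection
\[
\{X_j,\, C_1^j,\ldots,C_{k-1}^j\} \;\cup\; \{X_\ell : \ell \in [n]\setminus[k],\; \ell\neq \ell^*\}
\]
is an $n$-certificate of $X_{\ell^*}$ for agent $j$ under $\items$: it partitions $\items\setminus X_{\ell^*}$ into $1 + (k-1) + (n-k-1) = n-1$ bundles, and the required inequality $v_j(X_{\ell^*}) \geq v_j(Y\setminus g)$ holds for every $(Y,g)$ involved. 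Specifically, for $Y \in \{X_j\} \cup \{C_1^j,\ldots,C_{k-1}^j\}$ I combine $v_j(X_{\ell^*}) > v_j(X_j)$ with monotonicity and the $k$-certificate property, while for $Y = X_\ell$ with $\ell\neq\ell^*$ I use the maximality of $\ell^*$ together with monotonicity. This forces $X_{\ell^*} \in \eefx^n_j(\items)$, contradicting the hypothesis. Hence $\mathcal{C}_j$ must be a valid $n$-certificate, yielding $X_j \in \eefx^n_j(\items)$ for every $j \in [k]$, and completing the proof.
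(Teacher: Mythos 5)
Your proof is correct, and every step checks out: the collection $\{C_1^j,\ldots,C_{k-1}^j,X_{k+1},\ldots,X_n\}$ does partition $\items\setminus X_j$ into $n-1$ bundles, and in the contradiction branch the collection $\{X_j,C_1^j,\ldots,C_{k-1}^j\}\cup\{X_\ell:\ell\in[n]\setminus[k],\ \ell\neq\ell^*\}$ is a valid $n$-certificate of $X_{\ell^*}$ for agent $j$, contradicting $X_{\ell^*}\notin\eefx^n_j(\items)$. The route is, however, not the one the paper takes. The paper gives no standalone proof of the corollary: it presents it as following from Lemma~\ref{lem:reduce} by removing the matched bundles ``iteratively,'' i.e., by applying the single-bundle statement ($A\notin\eefx^n_i(\items)$ and $B\in\eefx^{n-1}_i(\items\setminus A)$ imply $B\in\eefx^n_i(\items)$) once per removed bundle. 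A naive such iteration actually needs care, because the hypothesis only says $X_\ell\notin\eefx^n_j(\items)$ at level $n$, while the intermediate steps require non-membership at lower levels; this is repaired by adding the removed bundles back in decreasing order of $v_j$ (an agent-specific ordering), which is exactly the role your $\argmax$ choice of $\ell^*$ plays in one shot. So your argument is essentially a compressed, multi-bundle version of the swap idea used inside the paper's proof of Lemma~\ref{lem:reduce}: it does not invoke the lemma at all, is self-contained, and explicitly supplies the maximizer detail that the paper's ``iterate the lemma'' remark glosses over; the paper's route, in exchange, keeps the corollary as a near-immediate consequence of an already-proved one-step lemma.
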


\begin{algorithm}[t]
    \caption{$\alg=\eefx(\I)$}\label{alg:eefx}
    \textbf{Input:} A fair division instance $\I = (\agents, \items, \valus)$ where agent $i \in \agents = [n]$ has monotone valuation $v_i$ over the set of items $\items$\\
    \textbf{Output:} An allocation $X=(X_1, X_2, \ldots, X_n)$
    \begin{algorithmic}[1]
        \If{$\agents = \emptyset$}
            \State \Return $\emptyset$;
        \EndIf
        \State $n \leftarrow |\agents|$
        \State $(X_1, \ldots, X_n) \leftarrow$ an $\efx$ allocation of $\items$ among $n$ agents with valuation $v_n$;   
        \State $G \leftarrow$ $\eefx$-graph of $\{X_1, \ldots, X_n\}$;
        \State Let $M=\{(k+1,X_{k+1}), \ldots, (n,X_n)\}$ be a matching of size at least $1$ such that $N(\{X_{k+1}, \ldots, X_n\})=\{k+1, \ldots, n\}$;
        \State $\agents' \leftarrow [k]$;
        \State $\items' \leftarrow \items \setminus \bigcup_{\ell \in [n] \setminus [k]} X_\ell$;
        \State $\valus' \leftarrow (V_1, \ldots, V_k)$;
        \State $(X_1, \ldots, X_k) \leftarrow \eefx(\agents', \items', \valus')$;
        \State \Return $(X_1, X_2, \ldots, X_n)$;
    \end{algorithmic}
\end{algorithm}

 We will now give a high-level overview of our constructive proof for  establishing the existence of $\eefx$ allocation among arbitrary number of agents with monotone valuations using $\alg$ (see Algorithm \ref{alg:eefx}). For a fair division instance $\I = (\agents, \items, \valus)$, our algorithm $\alg$, starts by considering an $\efx$ allocation $(X_1, \dots, X_n)$ of $\items$ among $n$ agents with valuation $v_n$. We know such an allocation exists by the work of~\cite{plaut2020almost}. Next, we construct the $\eefx$-graph $G$ between the bundles $X_1, \dots, X_n$ and the agents. Lemma~\ref{lem:matching} proves that there will always exist a non-trivial matching\footnote{Without loss of generality, we can rename the bundles and agents in the matching $M$} $M=\{(k+1,X_{k+1}), \ldots, (n,X_n)\}$ such that $N(\{X_{k+1}, \ldots, X_n\})=\{k+1, \ldots, n\}$. That is, for every $j \in [n] \setminus [k]$, bundle $X_j \in \eefx^n_j(\items)$. 

Next, $\alg$ reduces the instance by removing the agents $\{k+1,k+2, \dots, n\}$ from $\agents$ with their bundles $X_{k+1}, X_{k+2}, \dots, X_n$ safely. Note that, no agent $i \in [k]$ has any edge in $G$ to any bundle $X_j$ for $j \in [n] \setminus [k]$. Finally, this also implies that finding an $\eefx$ allocation $(X_1, X_2, \dots,X_k)$ in the reduced instance and combining it with $(X_{k+1}, X_{k+2}, \dots, X_n)$ leads to an overall $\eefx$ allocation in the original instance. That is, our technique enables us to reduce our instance, find an $\eefx$ allocation in the reduced instance, and combine it in such a way that we produce an $\eefx$ allocation the the original instance.

\begin{figure}[t]
    \centering
    \includegraphics[width=5.5cm]{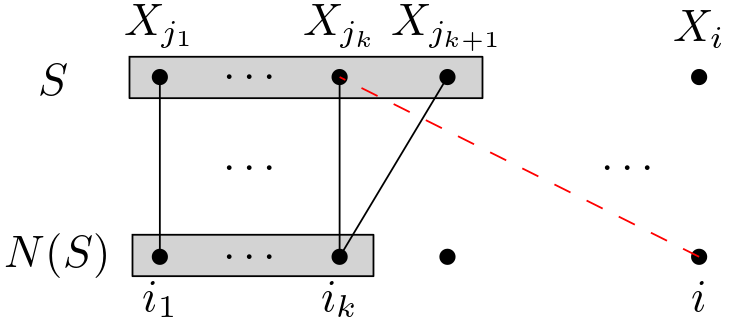}
    \caption{If $G(X)$ does not admit a perfect matching, then there exists a minimal subset $S=\{X_{j_1}, \ldots, X_{j_{k+1}}\}$ of bundles such that $|N(S)| < k+1$. Then, for all agent $i \in N(S)$ and all $\ell \in [k+1]$, no edge between $X_\ell$ and $i$ exists. In other words, no such red dashed edges can exist.}
    \label{fig:matching}
\end{figure}
We begin by proving Lemma~\ref{lem:matching}.
\begin{lemma}\label{lem:matching}
    For any fair division instance, consider an agent $i \in \agents$, let $(X_1, \ldots, X_n)$ be an $\efx$ allocation for an instance consisting of $n$ agents with identical valuations $v_i$. Let $G$ be the $\eefx$-graph with $n$ agents and $n$ bundles $X_1, \ldots, X_n$. Then there always exists a matching $M=\{(i_1,X_{j_1}), \ldots, (i_k,X_{j_k})\}$ of size at least $1$, such that $N(\{X_{j_1}, \ldots, X_{j_k}\})=\{i_1, \ldots, i_k\}$.
\end{lemma}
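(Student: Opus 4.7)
The plan is to argue via Hall's theorem applied to the bundle side of the $\eefx$-graph $G$, together with a single structural observation about the vertex corresponding to agent $i$ herself. First I would observe that agent $i$ is adjacent in $G$ to \emph{every} bundle $X_j$: because $(X_1,\ldots,X_n)$ is $\efx$ for $n$ agents all having valuation $v_i$, we have $v_i(X_j)\ge v_i(X_{j'}\setminus g)$ for all $j,j'\in[n]$ and all $g\in X_{j'}$, so for each $j$ the collection $\{X_{j'}:j'\neq j\}$ is an $n$-certificate witnessing $X_j\in\eefx_i^n(\items)$. Consequently every bundle has at least the agent-$i$ vertex in its neighborhood, which forces $|N(S)|\ge 1$ for any nonempty set $S$ of bundles.

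Next, I would split into two cases. If $G$ admits a perfect matching $M=\{(1,X_{\sigma(1)}),\ldots,(n,X_{\sigma(n)})\}$ between agents and bundles, then $M$ itself satisfies the conclusion of the lemma with $k=n$, since trivially $N(\{X_{\sigma(1)},\ldots,X_{\sigma(n)}\})\subseteq\agents=\{1,\ldots,n\}$. Otherwise, Hall's condition fails on the bundle side, so there is a subset $S$ of bundles with $|N(S)|<|S|$; pick such an $S$ of minimum cardinality and write $|S|=k+1$, $S=\{X_{j_1},\ldots,X_{j_{k+1}}\}$. Because agent $i$ belongs to $N(S)$ we have $|N(S)|\ge 1$, hence $k\ge 1$, so the matching we build below will automatically be non-empty.

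Now I would exploit the minimality of $S$. For the subset $S':=S\setminus\{X_{j_{k+1}}\}$ of size $k$, every $T\subseteq S'$ is a strict subset of $S$ and therefore satisfies $|N(T)|\ge|T|$ by minimality; in particular Hall's condition holds for $S'$. Applying Hall's theorem inside the subgraph induced by $S'$ and $N(S')$ yields a matching that saturates $S'$, giving a matching of the form $\{(i_1,X_{j_1}),\ldots,(i_k,X_{j_k})\}$. It remains to verify that $\{i_1,\ldots,i_k\}=N(\{X_{j_1},\ldots,X_{j_k}\})$. The inclusion $\{i_1,\ldots,i_k\}\subseteq N(\{X_{j_1},\ldots,X_{j_k}\})$ is immediate from the matching. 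For the reverse inclusion, note that $N(S')\subseteq N(S)$ and $|N(S)|\le k$, while minimality gives $|N(S')|\ge k$; sandwiching yields $|N(S')|=k$, and since the matching already exhibits $k$ distinct agents $i_1,\ldots,i_k\in N(S')$, we conclude $N(S')=\{i_1,\ldots,i_k\}$, which is exactly the claim.

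The only potentially delicate step is the minimality bookkeeping in the last paragraph: one must be careful to apply Hall's condition only to \emph{strict} subsets of the minimal violator $S$, and then use the sandwich $k\le|N(S')|\le|N(S)|\le k$ to pin the neighborhood down exactly. Everything else (the $\efx$-to-edge argument for agent $i$ and the perfect-matching case) is essentially a one-line observation.
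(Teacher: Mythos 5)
Your proof is correct and follows essentially the same route as the paper's: handle the perfect-matching case, take a minimal Hall violator $S$ of size $k+1$, match a size-$k$ subset using Hall's theorem, and use the sandwich $k\le|N(S')|\le|N(S)|\le k$ together with the fact that agent $i$ (being $\efx$-satisfied with every bundle) lies in $N(S)$ to force $k\ge 1$ and pin down the neighborhood. Your explicit verification that Hall's condition holds for \emph{all} subsets of $S'$ and that agent $i$ is adjacent to every bundle just makes precise what the paper states more tersely.
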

\begin{proof}
     To begin with, if $G$ has a perfect matching $M=\{(i_1,X_{j_1}), \ldots, (i_n,X_{j_n})\}$, then the lemma trivially holds true since $\agents \setminus \{i_1, \ldots, i_n\} = \emptyset$.
     
     Therefore, let us assume that no perfect matching exists in $G$. This implies that the Hall's condition is not satisfied, i.e., there exists a subset $S=\{X_{j_1}, \ldots, X_{j_{k+1}}\}$ of bundles such that $|N(\{X_{j_1}, \ldots, X_{j_{k+1}}\})| < k+1$. See Figure \ref{fig:matching} for a better intuition. We assume that the subset $S=\{X_{j_1}, \ldots, X_{j_{k+1}}\}$ is minimal. That is, for all $S' \subsetneq S$, we have $N(S') \geq |S'|$. Now consider $T=\{X_{j_1}, \ldots, X_{j_k}\} \subsetneq S$. By minimality of $S$, we know that Hall's condition holds for $T$, i.e., there exists a perfect matching, say $M=\{(i_1,X_{j_1}), \ldots, (i_k,X_{j_k})\}$ between the nodes in $T$ and $N(T)$. Since $|N(S)|<k+1$ and $\{i_1, \ldots, i_k\} \subseteq N(T) \subseteq N(S)$, it follows that $N(S) = N(T) = \{i_1, \ldots, i_k\}$. 
     
     Note that since $(X_1, \dots, X_n)$ is an $\efx$ allocation for an instance with identical valuations $v_i$, we know that $i \in N(S)$, thus $k \geq 1$. Hence, $M=\{(i_1,X_{j_1}), \ldots, (i_k,X_{j_k})\}$ is a matching of size $k \geq 1$, such that $N(\{X_{j_1}, \ldots, X_{j_k}\})=\{i_1, \ldots, i_k\}$.  The stated claim stands proven.
\end{proof}
\begin{theorem}[\cite{plaut2020almost}]\label{thm:PR}
    When agents have identical monotone valuations, there always exists an $\efx$ allocation.
\end{theorem}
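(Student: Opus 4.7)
The plan is to follow the Plaut--Roughgarden leximin$++$ argument. Let $v$ denote the common valuation of all $n$ agents. For every allocation $X=(X_1,\ldots,X_n)$, define its signature $\mathrm{sig}(X)$ to be the sequence obtained by sorting the pairs $(v(X_k), |X_k|)$ in non-decreasing lexicographic order (so bundles with small value come first, with ties in value broken by placing smaller-sized bundles earlier). Let $X^*$ be an allocation that maximizes $\mathrm{sig}(\cdot)$ in lexicographic order; such an $X^*$ exists because the set of allocations is finite. I will show $X^*$ must be $\efx$.

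Suppose, for contradiction, that $X^*$ is not $\efx$. Then there exist agents $i, j$ and an item $g\in X^*_j$ with $v(X^*_i) < v(X^*_j\setminus\{g\})$. Since valuations are identical, swapping bundles between agents produces an equally legitimate allocation whose signature depends only on the multiset of bundles, so I define $Y$ by setting $Y_i := X^*_j\setminus\{g\}$, $Y_j := X^*_i \cup \{g\}$, and $Y_k := X^*_k$ otherwise. Write $a := v(X^*_i)$. By monotonicity, $v(Y_i) = v(X^*_j\setminus\{g\}) > a$ and $v(Y_j) = v(X^*_i\cup\{g\}) \geq a$. Thus the multiset of $(v,|\cdot|)$-pairs underlying $\mathrm{sig}(Y)$ differs from that of $\mathrm{sig}(X^*)$ only in the two affected positions: the pair $(a,|X^*_i|)$ is replaced by $(v(Y_j),|X^*_i|+1)$ and the pair $(v(X^*_j),|X^*_j|)$ is replaced by $(v(Y_i),|X^*_j|-1)$. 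The next step is to compare $\mathrm{sig}(X^*)$ and $\mathrm{sig}(Y)$ entry by entry from the left: all pairs with value strictly less than $a$ are untouched, and the leftmost place where the two signatures can differ is at an entry with value $a$.

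At that entry, $\mathrm{sig}(X^*)$ has $(a,|X^*_i|)$, while in $\mathrm{sig}(Y)$ both replacement pairs have value $\geq a$, with $(v(Y_i),\cdot)$ being strictly greater than $a$ in the first coordinate. The only way $\mathrm{sig}(Y)$ can still place a pair with value exactly $a$ at this position is via $Y_j$, which occurs when $v(Y_j)=a$; but in that case the second coordinate is $|X^*_i|+1 > |X^*_i|$, and the pair $(a, |X^*_i|+1)$ strictly exceeds $(a,|X^*_i|)$ in lex order. Hence $\mathrm{sig}(Y)$ is strictly lex-greater than $\mathrm{sig}(X^*)$, contradicting the maximality of $X^*$.

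The step I expect to be the main obstacle, and which requires the most care, is precisely this final lex comparison: the corner case where $g$ is a ``null'' addition to $X^*_i$, i.e.\ $v(Y_j)=a$. Here the first coordinate alone does not yield improvement, and the whole argument rests on the size tie-breaker in the definition of $\mathrm{sig}$ -- without the cardinality component one could get stuck with $\mathrm{sig}(Y)=\mathrm{sig}(X^*)$. Once this tie-break is in place, the rest of the argument reduces to bookkeeping about which pairs in the multiset change, and the existence of $X^*$ is immediate from finiteness of the allocation space.
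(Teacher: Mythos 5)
Your proposal is correct: it is exactly the leximin$++$ argument of \cite{plaut2020almost}, which is the source the paper cites for this theorem (the paper itself gives no proof of Theorem~\ref{thm:PR}). You correctly identify the one delicate point — when the transferred item has zero marginal value to the poorer bundle, the value coordinate alone cannot improve, and the cardinality tie-breaker $(a,|X^*_i|)\mapsto(a,|X^*_i|+1)$ is what rescues the lexicographic improvement — which is precisely why plain leximin fails for general monotone valuations and leximin$++$ is needed.
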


We are now ready to discuss our main result that constructively establishes the existence of $\eefx$ allocation among arbitrary number of agents with monotone valuations using $\alg$.

\begin{theorem}\label{thm:eefx}
    $\eefx$ allocations exist for any fair division instance with monotone valuations.  In particular, $\alg$ returns an $\eefx$ allocation.
\end{theorem}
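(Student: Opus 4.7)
The plan is to prove Theorem~\ref{thm:eefx} by strong induction on $n = |\agents|$, following the recursive structure of Algorithm~\ref{alg:eefx}. The base case $n=0$ is immediate, since the empty allocation is vacuously $\eefx$. For the inductive step I assume that $\alg$ correctly returns an $\eefx$ allocation on every fair division instance with strictly fewer than $n$ agents, and then show that the allocation $(X_1,\dots,X_n)$ produced at the top-level call is $n$-epistemic-$\efx$ for every agent.

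In the inductive step I would proceed as follows. First, invoke Theorem~\ref{thm:PR} to obtain an $\efx$ allocation $(X_1,\dots,X_n)$ of $\items$ for the auxiliary instance in which all $n$ agents share the identical valuation $v_n$. Second, construct the $\eefx$-graph $G$ for this partition and invoke Lemma~\ref{lem:matching} to extract a non-trivial matching which, after relabeling, has the form $M = \{(k+1,X_{k+1}),\dots,(n,X_n)\}$ with $N(\{X_{k+1},\dots,X_n\}) = \{k+1,\dots,n\}$ and $n-k\geq 1$. By definition of the graph, this gives two properties simultaneously: $X_j \in \eefx_j^n(\items)$ for every $j \in [n]\setminus[k]$, and $X_j \notin \eefx_i^n(\items)$ for every $i \in [k]$ and every $j \in [n]\setminus[k]$. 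Third, I apply the inductive hypothesis to the strictly smaller instance $\I' = ([k], \items\setminus\bigcup_{\ell>k}X_\ell, (v_1,\dots,v_k))$: the recursive call returns an $\eefx$ allocation $(X_1,\dots,X_k)$ of $\I'$, which by definition means $X_i \in \eefx_i^k(\items\setminus\bigcup_{\ell>k}X_\ell)$ for each $i \in [k]$. Since $n-k\geq 1$, the recursion strictly decreases the number of agents and therefore terminates in at most $n$ levels.

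The final step is to argue that splicing together the two pieces yields an allocation in which $X_i \in \eefx_i^n(\items)$ for \emph{every} agent $i \in [n]$. For the agents in $[n]\setminus[k]$ this is already guaranteed by the matching. For each agent $i \in [k]$, I would iterate Lemma~\ref{lem:reduce} along the bundles $X_n, X_{n-1},\dots, X_{k+1}$: using that $X_j \notin \eefx_i^n(\items)$ for every $j>k$, a single application of the lemma upgrades membership in $\eefx_i^{n-1}(\items\setminus X_n)$ to membership in $\eefx_i^n(\items)$; repeating this $n-k$ times shows that $\eefx_i^k(\items\setminus\bigcup_{\ell>k}X_\ell) \subseteq \eefx_i^n(\items)$, so $X_i$ indeed lies in $\eefx_i^n(\items)$. (This step is precisely what Corollary~\ref{cor:reduce} packages.) Combined with the property for the matched agents, the returned allocation is $\eefx$, completing the induction.

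The substantive obstacle is not the glueing step but ensuring that the matching always gives nontrivial progress, i.e.\ that $k<n$ in the notation of the algorithm. This is exactly Lemma~\ref{lem:matching}, whose proof exploits that the auxiliary allocation was chosen as an $\efx$ allocation for $n$ agents with the \emph{identical} valuation $v_n$, guaranteeing that $v_n$ (and hence at least one agent-node) is adjacent to any minimal Hall-violating set of bundles. Everything else---the reduction to smaller instances, the correctness of combining partial allocations, and the upgrade from $k$-epistemic-$\efx$ to $n$-epistemic-$\efx$---flows cleanly from Lemma~\ref{lem:reduce} and Theorem~\ref{thm:PR}, which are already in place.
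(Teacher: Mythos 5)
Your proposal is correct and follows essentially the same route as the paper: existence of an $\efx$ allocation under the identical valuation $v_n$ (Theorem~\ref{thm:PR}), the non-trivial matching in the $\eefx$-graph from Lemma~\ref{lem:matching}, induction on the number of agents, and the glueing step packaged by Corollary~\ref{cor:reduce}. The only caveat is minor: iterating Lemma~\ref{lem:reduce} literally in the order $X_n, X_{n-1},\dots$ requires a little care (the second application needs $X_{n-1}\notin\eefx^{n-1}_i(\items\setminus X_n)$, not just $X_{n-1}\notin\eefx^{n}_i(\items)$, so one should re-add the removed bundles in a value-sorted order or argue directly), but since you invoke Corollary~\ref{cor:reduce} exactly as the paper does, this does not change the argument.
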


\begin{proof}    
    We begin by proving that Algorithm \ref{alg:eefx} terminates. By Lemma \ref{lem:matching}, a matching $M=\{(i_1,X_{i_1}), \ldots, (i_t,X_{i_t})\}$ of size at least $1$ exists such that $N(\{X_{i_1}, \ldots, X_{i_t}\})=\{i_1, \ldots, i_t\}$. Note that we can rename the bundles and the agents and without loss of generality assume that the considered matching is $M=\{(k+1,X_{k+1}), \ldots, (n,X_n)\}$. Therefore, after removing $\{k+1, \ldots, n\}$ from $\agents$, the size of $\agents$ decreases. Hence, the depth of the recursion is bounded by $n$ (the initial number of agents). 
    
    We prove the correctness of $\alg$ by using induction on the number of the agents. If $\agents=\emptyset$, then $\emptyset$ is an $\eefx$ allocation. We assume that $\alg$ returns an $\eefx$ allocation for any fair division instance with $n' < n$ agents with monotone valuations. Consider the matching $M$ described in $\alg$. We will show the output allocation of $\alg$ for $n$ agents is $\eefx$ as well. For any $i \in [n] \setminus [k]$ and any $j \in [k]$, the matching $M$ ensures that we have $X_i \in \eefx^n_i$, and $X_i \notin \eefx^n_j$ (see Figure~\ref{fig:matching}). By induction hypothesis $(X_1, \ldots, X_k)$ is an $\eefx$ allocation of $\items \setminus \bigcup_{\ell \in [k]}X_\ell$ for agents in $[k]$. Thus, by Corollary \ref{cor:reduce}, $(X_1, X_2, \ldots, X_n)$ is an $\eefx$ allocation for agents in $[n]$.
\end{proof}

\begin{remark}
    All proofs of this section that we have for the setting when items are goods easily extend to the setting when these items are `\emph{chores}'. Formally, when agent valuations are monotonically decreasing, then EEFX allocations are guaranteed to exist for an arbitrary number of agents.
\end{remark}
\section{Hardness Results} \label{sec:hardness}

In this section, we complement our existential result of $\eefx$ allocations for monotone valuations by proving computational and information-theoretic lower bounds for finding an  $\eefx$ allocation.
When agents have submodular valuation functions, the way to compute the value $v(S)$ for a subset $S$ of the items is through making value queries.~\cite{plaut2020almost} proved that exponentially many value queries are required to compute an $\efx$ allocation \emph{even} for two agents with identical submodular valuations. Formally, they proved the following information-theoretic lower bounds.
\begin{theorem}[\cite{plaut2020almost}]\label{thm:PR-exp-lower}
    The query complexity of finding an $\efx$ allocation with $|\items|=2k+1$ many items is $\Omega(\frac{1}{k} {2k+1 \choose k})$, even for two agents with identical submodular valuations.
\end{theorem}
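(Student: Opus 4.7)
The plan is to prove this lower bound by the standard adversary method for value-query complexity. I would construct a large family of identical monotone submodular valuations on $2k+1$ items, indexed by a ``hidden'' subset $T$, such that (i) the $\efx$ allocations of $v_T$ depend sensitively on $T$, and (ii) each value query can eliminate only $O(k)$ candidates $T$. Since there are $\binom{2k+1}{k+1} = \binom{2k+1}{k}$ possible candidates, this would directly yield an $\Omega(\binom{2k+1}{k}/k)$ query lower bound.

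For the family, I would start from a matroid-rank-like baseline such as $v_0(S) = \min(|S|, k)$ (which is monotone and submodular) and let $v_T$ be a carefully chosen combinatorial perturbation keyed to a hidden set $T$ of size $k+1$, so that $v_T(S) = v_0(S)$ except when $S$ ``probes'' $T$ in a specific way (for instance, when $S$ contains $T$, or equals $T$ or $\items \setminus T$). The subtlety is that naive indicator perturbations such as $v_0(S)+\mathbbm{1}[T \subseteq S]$ tend to break submodularity, so one must use a structured object, e.g., a truncated matroid with an extra rank step at $T$, or a weighted coverage function keyed to $T$, to preserve monotonicity and submodularity simultaneously. After defining the family, one characterizes the $\efx$ allocations of $v_T$ and shows that any $\efx$ partition $(A,B)$ essentially isolates $T$ (forcing, say, $A=T$ or $B=T$ up to boundary cases), so that a single partition cannot be $\efx$ for too many distinct $T$'s.

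With the family in hand, the adversary maintains the set $\mathcal{L}$ of candidates still consistent with the answers given so far, and responds to each query $Q$ with the baseline value $v_0(Q)$ (or whichever of the two possible answers keeps $|\mathcal{L}|$ larger). A direct counting argument shows that each query eliminates at most $O(k)$ members of $\mathcal{L}$, since only $O(k)$ hidden sets $T$ of size $k+1$ can ``trigger'' a deviation from $v_0$ on the particular set $Q$. Hence after $q = o(\binom{2k+1}{k}/k)$ queries the set $\mathcal{L}$ still has many live candidates, while the algorithm's output partition is $\efx$ for at most a small, controlled number of them, producing a live $T$ that refutes the output. The main obstacle I expect is step one: engineering a perturbation family that is simultaneously monotone, submodular, ``sparse'' enough that each query rules out only $O(k)$ candidates, and ``sensitive'' enough that $\efx$ allocations genuinely encode $T$. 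The adversary and counting phases should then be routine once the correct combinatorial construction is pinned down.
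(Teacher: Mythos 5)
This statement is not proved in the paper at all: it is imported verbatim from \citet{plaut2020almost} and used as a black box in the reduction of Theorem~\ref{thm:exp}, so there is no in-paper argument to compare yours against; what follows assesses your sketch against what is actually needed (and against the known Plaut--Roughgarden proof).

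Your adversary framework is standard, but the load-bearing step --- the construction of the family $\{v_T\}$ --- is exactly the part you leave open, and the specific style of construction you propose appears to fail. With the baseline $v_0(S)=\min(|S|,k)$ on $2k+1$ items, \emph{every} partition into sides of sizes $k$ and $k+1$ is already $\efx$: both sides have value $k$, and removing any item from the $(k{+}1)$-side drops it to a set of size $k$, so neither agent strongly envies. A sparse perturbation keyed to a hidden $(k{+}1)$-set $T$ (e.g.\ bumping $v$ on supersets of $T$, or on $T$ and its complement) does not change this: for any balanced partition $(A,B)$ the only set of size $\geq k+1$ that an agent compares against after removing an item has size $k$, so the perturbation is never ``seen'' by the $\efx$ condition, and a zero-query algorithm that outputs an arbitrary balanced partition is correct for every candidate $T$. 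More generally, your two requirements are in tension: sparsity (each query eliminates only $O(k)$ candidates) forces the perturbation to be tiny and localized, while $\efx$-sensitivity (the output partition must be wrong for most surviving $T$) forces it to alter the $\efx$ status of almost all balanced partitions. You would need to resolve this tension, and the indicator-perturbation route does not. The actual proof in \citet{plaut2020almost} takes a different shape: it encodes an \emph{arbitrary} hidden function on the Kneser graph $K(2k+1,k)$ (vertices are the $k$-subsets, adjacency is disjointness, degree $k+1$) into a single identical monotone submodular valuation so that $\efx$ allocations correspond precisely to local optima of that hidden function, and then applies a deterministic query lower bound for local search on this graph of order (number of vertices)/(degree) $=\Omega\bigl(\tfrac{1}{k}\binom{2k+1}{k}\bigr)$, which is where the stated bound comes from. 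So the missing idea is not a cleverer perturbation but a reduction from local search; without it your plan does not go through.
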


Moreover, \cite{Goldberg2023} proved the following computational hardness for $\efx$ allocations.
\begin{theorem}[\cite{Goldberg2023}]\label{thm:cite-pls-hard}
    The problem of computing an $\efx$ allocation for two agents with identical submodular valuations is $\pls$-complete.
\end{theorem}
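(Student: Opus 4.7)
The plan is to establish PLS-completeness in two directions: membership in PLS (giving a local-search formulation whose local optima are $\efx$ allocations) and PLS-hardness (reducing from a canonical PLS-complete problem).

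For PLS-membership, I would build on the Plaut-Roughgarden characterization: for two agents with an identical monotone valuation $v$, the leximin-optimal partition $(X_1, X_2)$ (maximizing $\min(v(X_1), v(X_2))$, breaking ties by maximizing the larger side) is $\efx$. This suggests the following PLS formulation. Feasible solutions are ordered partitions $(X_1, X_2)$ of $\items$. The objective is the leximin vector $(v(X_1), v(X_2))$, which for two agents can be encoded as a single integer, e.g.\ $N \cdot \min(v(X_1),v(X_2)) + \max(v(X_1),v(X_2))$ for sufficiently large $N$. The neighborhood allows a single-item transfer from the ``larger'' bundle to the ``smaller'' bundle, of the kind used in the Plaut-Roughgarden constructive proof. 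I would then verify (i) that every local optimum under this neighborhood is $\efx$ (if not, the witness item $g \in X_j$ with $v(X_j \setminus g) > v(X_i)$ supplies an improving move), and (ii) that the neighbor, the objective, and the improvement check each require only polynomially many value queries, placing the problem in $\pls$ under the value-oracle model.

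For PLS-hardness, I would reduce from a known $\pls$-complete problem, the natural candidate being Local Max-Cut (with the single-vertex-flip neighborhood) or, more directly, a submodular local-search problem. The reduction must encode an arbitrary source instance as a two-agent identical submodular valuation in which $\efx$ allocations are in bijection with local optima of the source instance. A natural template: take a Max-Cut instance on a graph $G=(V,E)$, identify items with vertices, let the cut function $c(S)$ (which is submodular) form the main component of $v(S)$, and add a carefully chosen low-order ``balancing'' term that forces any $\efx$ partition $(X_1, X_2)$ to be balanced and to correspond to a locally maximum cut. The $\efx$ condition, which says no single-item transfer from the opposing bundle would yield a strictly better bundle, should translate precisely to the no-improving-flip condition at a vertex.

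The main obstacle will be the PLS-hardness reduction. $\efx$ for two agents with identical submodular valuations is a two-sided local condition: neither $v(X_1) \geq v(X_2 \setminus g)$ nor $v(X_2) \geq v(X_1 \setminus g)$ may be violated for any $g$. Standard $\pls$-complete problems typically have a one-sided local-improvement check, so I will need to design the valuation $v$---likely via an outer symmetrizing wrapper and a penalty term that enforces balance---so that the two $\efx$ checks jointly encode exactly the single local-optimality check of the source instance, while simultaneously preserving monotonicity and submodularity of $v$ and keeping the reduction polynomial-time with a polynomially-evaluable value oracle. Confirming that the resulting $v$ is genuinely submodular (and not merely XOS) under the balancing perturbation is where the delicate work lies.
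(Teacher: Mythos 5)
This statement carries no proof in the paper: it is imported verbatim from \cite{Goldberg2023} and used as a black box in the reduction establishing Theorem~\ref{thm:pls}. So there is nothing in the paper to compare your proposal against; I can only assess the proposal on its own terms, and as it stands it is a plan with two genuine gaps rather than a proof.

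First, the membership argument. Your potential $N\cdot\min(v(X_1),v(X_2))+\max(v(X_1),v(X_2))$ does not make every local optimum $\efx$. Suppose $v(X_1)\le v(X_2)$ and some $g\in X_2$ witnesses an $\efx$ violation, i.e.\ $v(X_2\setminus g)>v(X_1)$. Transferring $g$ to $X_1$ gives new minimum $\min(v(X_1\cup g),\,v(X_2\setminus g))$; if $v(X_1\cup g)=v(X_1)$ (perfectly possible for a monotone, even submodular, $v$), the minimum is unchanged while the maximum $v(X_2\setminus g)\le v(X_2)$ can only drop, so the move is not improving and a non-$\efx$ partition can be locally optimal. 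The fix is the one already in \cite{plaut2020almost}: the leximin\texttt{++} tie-break is by \emph{cardinality} of the worst-off bundle, not by the value of the larger side; with potential (say) $N\cdot\min(v(X_1),v(X_2))+|{\arg\min}|$ the transfer above strictly increases the second coordinate, and local optima are exactly $\efx$. Second, and more importantly, the $\pls$-hardness direction is the entire content of the theorem, and your sketch stops exactly where the work begins: the cut function is submodular but not monotone, the ``balancing term'' must simultaneously restore monotonicity, preserve submodularity, force balance, and make the two-sided $\efx$ condition coincide with the one-sided no-improving-flip condition of Local Max-Cut. You correctly identify this as the obstacle, but identifying an obstacle is not the same as clearing it; without an explicit construction and a verification of submodularity, the reduction is a conjecture. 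If your goal is to use this theorem (as the paper does), cite \cite{Goldberg2023}; if your goal is to reprove it, the hardness construction there is the part you would need to reproduce in full.
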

We ask our readers to refer to Appendix \ref{app:pls} for further discussion on the complexity class $\pls$. Let us now define the computational problems corresponding to finding $\efx$ and $\eefx$ allocations. 
\begin{definition}($\idefx$)
Given a fair division instance $\I=([2], \M, (v,v))$ with two agents having identical submodular valuations $v$, find an $\efx$ allocation.
\end{definition}

\begin{definition}($\ideefx$)
Given a fair division instance $\I=([n], \M, (v, \dots, v))$ with $n$ agents having identical submodular valuations $v$, find an $\eefx$ allocation.
\end{definition}

We reduce the problem of finding an $\efx$ allocation for two agents with identical submodular valuations ($\idefx$) to finding an  $\eefx$ allocation for an arbitrary number of agents with identical submodular valuations ($\ideefx$), thereby establishing similar hardness results for the latter. 

\textbf{Our Reduction:}
Consider an arbitrary instance $\I=([2], \M, (v,v))$ of $\idefx$ with two agents having identical submodular valuations $v$. Let $\I' = ([n], \M', (v', \ldots, v')\})$ be an instance of $\ideefx$ with $n$ agents having identical valuations $v'$ over the set of items $\M' =  \M \cup \{h_1, \ldots, h_{n-2}\}$.
We define the valuation $v$ as follows.
\begin{itemize}
    \item For all $S \subseteq \M$, $v'(S)=v(S)$. 
    \item For all $j \in [n-2]$, $v'(h_j) = 2v(\items)+1$.
    \item For all $j \in [n-2]$ and $S \subseteq \M' \setminus \{h_j\}$, $v'(S \cup \{h_j\}) = v(S)+v(h_j)$.
\end{itemize}
We call items $h_1, \ldots, h_{n-2}$ heavy items. Note that we can compute $\I'$ from $\I$ in polynomial time.

\begin{restatable}{lemma}{subLem}\label{lem:submodul}
    If $v$ is a submodular function, then $v'$ is a submodular function as well.
\end{restatable}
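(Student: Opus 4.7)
The plan is to first rewrite $v'$ in a cleaner, closed form that makes the structure transparent, and then verify submodularity either through the pairwise inequality or the diminishing-marginals characterization. The key observation is that each heavy item $h_j$ contributes a fixed constant $2v(\M)+1$ to $v'(S)$ whenever $h_j \in S$, with no interaction at all between heavy items or between heavy items and items of $\M$. Concretely, I would prove by induction on $|S \cap \{h_1,\ldots,h_{n-2}\}|$, using the third bullet of the definition, that for every $S \subseteq \M'$,
\[
v'(S) \;=\; v(S \cap \M) \;+\; \sum_{j\,:\,h_j \in S} (2v(\M)+1).
\]
The base case $S \subseteq \M$ is the first bullet; the inductive step peels off one heavy item via the recursion $v'(S' \cup \{h_j\}) = v'(S') + v(h_j)$ (reading $v(S')$ in the third bullet as the previously-established value of $v'$ on the smaller set $S'$, since that bullet is really the inductive recipe).

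With this closed form in hand, submodularity is essentially immediate. I would use the diminishing-marginals definition: show that for all $S \subseteq T \subseteq \M'$ and $g \in \M' \setminus T$,
\[
v'(S \cup \{g\}) - v'(S) \;\geq\; v'(T \cup \{g\}) - v'(T).
\]
Split on the type of $g$. If $g = h_j$ is heavy, the closed form gives marginal $2v(\M)+1$ on both sides, so the inequality holds with equality. If $g \in \M$, the heavy-item sums in $v'(S \cup g) - v'(S)$ and $v'(T \cup g) - v'(T)$ cancel, leaving precisely $v((S \cap \M) \cup \{g\}) - v(S \cap \M)$ and $v((T \cap \M) \cup \{g\}) - v(T \cap \M)$; since $S \cap \M \subseteq T \cap \M$ and $v$ is submodular on $\M$, the inequality follows.

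I expect no real obstacle here: the only thing that needs genuine care is the inductive rewriting of $v'$, because the third bullet defines $v'$ recursively on sets that may already contain other heavy items, so one must check that the recursion is well-defined (independent of the order in which heavy items are stripped) and consistent with the first two bullets. Once the additive-in-heavy-items closed form is established, the submodularity proof is a two-line case analysis. As an alternative, I could equivalently use the pairwise form $v'(A)+v'(B) \geq v'(A\cup B)+v'(A\cap B)$: the heavy-item contributions satisfy this with equality by inclusion--exclusion on indicator sums, while the $\M$-parts satisfy it by submodularity of $v$, and adding the two yields the result.
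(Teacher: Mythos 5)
Your proof is correct, and your main route differs from the paper's in the verification step. You first establish the closed form $v'(S) = v(S\cap\M) + |S\cap H|\,\bigl(2v(\M)+1\bigr)$, where $H=\{h_1,\ldots,h_{n-2}\}$ (taking care that the recursive third bullet of the definition is order-independent and consistent with the first two bullets), and then check submodularity via diminishing marginal returns, splitting on whether the added item is heavy. The paper instead verifies the pairwise inequality $v'(S)+v'(T)\geq v'(S\cup T)+v'(S\cap T)$ directly: it splits each of $S,T$ into its heavy part $H_S,H_T$ and its part inside $\M$, applies submodularity of $v$ to the $\M$-parts, and uses additivity of $v'$ on heavy items to get $v'(H_S)+v'(H_T)=v'(H_S\cup H_T)+v'(H_S\cap H_T)$ --- which is precisely the alternative you sketch in your final sentence. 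The two arguments rest on the same structural fact (each heavy item contributes a fixed constant with no interaction) and are equivalent, since for arbitrary set functions the pairwise and diminishing-marginals characterizations of submodularity coincide. A small merit of your version is that it makes the well-definedness of $v'$ explicit: the paper silently uses the decomposition $v'(S)=v'(S\setminus H_S)+v'(H_S)$, which is exactly the closed form you prove by induction.
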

\begin{proof}
    We need to prove that for all $S, T \subseteq \items$, $v'(S)+v'(T) \geq v'(S \cup T) + v'(S \cap T)$.
    Let $H_S$ and $H_T$ be the set of all heavy items in $S$ and $T$ respectively. We have 
    \begin{align*}
        v'(S)+v'(T) &= v'(S \setminus H_S) + v'(H_S) + v'(T \setminus H_T) + v'(H_T) \\
        &= \left( v(S \setminus H_S) + v(T \setminus H_T) \right) +  v'(H_S) + v'(H_T) ) \\
        &\geq v((S \setminus H_S) \cup (T \setminus H_T)) + v((S \setminus H_S) \cap (T \setminus H_T)) + v'(H_S) + v'(H_T) \tag{submodolarity of $v$} \\
        &= v'((S \cup T) \setminus (H_S \cup H_T)) + v'((S \cap T) \setminus (H_S \cap H_T)) + v'(H_S) + v'(H_T) \\
        &= v'((S \cup T) \setminus (H_S \cup H_T)) + v'((S \cap T) \setminus (H_S \cap H_T)) \\
        &\textcolor{white}{=} + v'(H_S \cup H_T) + v'(H_S \cap H_T) \tag{additivity of $v'$ on heavy items}\\
        &= v'(S \cup T) + v'(S \cap T).
    \end{align*}
\end{proof}

\begin{lemma}\label{lem:poly-reduction}
    Given any $\eefx$ allocation $A$ in $\I'$, we can create an $\efx$ allocation in $\I$ in polynomial time, where $\I$ and $\I'$ are as defined above.     
\end{lemma}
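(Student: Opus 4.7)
The plan is to output the two-bundle partition $(A_p, \M \setminus A_p)$ for a suitable choice of $p$, where $A = (A_1, \ldots, A_n)$ is the given $\eefx$ allocation of $\I'$. Since $\I'$ has only $n-2$ heavy items $h_1, \ldots, h_{n-2}$ distributed among $n$ agents, pigeonhole guarantees at least two agents $p, q$ whose bundles $A_p, A_q$ contain no heavy item, and any such $p$ can be located in polynomial time. (If $v(\M) = 0$, monotonicity makes every partition $\efx$, so I may assume $v(\M) > 0$.)

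The crux of the argument is a structural rigidity of the $\eefx$-reshuffle $Y$ certifying $A_p \in \eefx^n_p(\M')$, in which $Y_p = A_p$. Each heavy item alone already has $v'$-value $2v(\M)+1 > v(\M) \ge v'(Y_p)$, so if some $Y_k$ with $k \neq p$ contained two heavy items---or one heavy item together with any item of $\M$---then removing a single element would still leave $v'(Y_k \setminus g) \ge 2v(\M)+1 > v'(Y_p)$, contradicting $p$'s $\efx$-satisfaction under $Y$. Hence every $Y_k$ with $k \neq p$ is either a singleton heavy item or lies entirely in $\M$; the $n-2$ heavy items therefore occupy $n-2$ singleton bundles, and the single remaining bundle $Y_{j^*}$ must equal $\M \setminus A_p$. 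Applying the $\efx$-satisfaction of $p$ to $Y_{j^*}$ then yields
\begin{equation*}
    v(A_p) \;\geq\; v\bigl((\M \setminus A_p) \setminus g\bigr) \qquad \text{for every } g \in \M \setminus A_p .
\end{equation*}

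For the reverse $\efx$ inequality, I would repeat the same rigidity analysis on $q$'s reshuffle to deduce $v(A_q) \ge v\bigl((\M \setminus A_q) \setminus g\bigr)$ for every $g \in \M \setminus A_q$. Since $A$ is a partition, $A_p \subseteq \M \setminus A_q$ and $A_q \subseteq \M \setminus A_p$, so for any $g \in A_p$ the monotonicity of $v$ chains into
\begin{equation*}
    v(A_p \setminus g) \;\le\; v\bigl((\M \setminus A_q) \setminus g\bigr) \;\le\; v(A_q) \;\le\; v(\M \setminus A_p) ,
\end{equation*}
which is precisely the remaining $\efx$ condition. The main obstacle is the rigid structural dichotomy for $Y$; once that is in place, a short monotonicity chain and a pigeonhole step finish the proof, and polynomial-time computability is immediate because only the heavy-item-free agent $p$ needs to be identified, no reshuffle need be constructed.
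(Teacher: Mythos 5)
Your proposal is correct and takes essentially the same approach as the paper: pigeonhole yields two heavy-item-free bundles, and the rigidity argument on the $n$-certificate forces it to be the $n-2$ heavy singletons together with $\M \setminus A_p$, giving $v(A_p) \ge v((\M \setminus A_p)\setminus g)$ for all $g$. The only (minor, harmless) difference is the reverse inequality: the paper picks the lower-valued of the two heavy-free bundles and uses monotonicity ($v(\M\setminus A_1)\ge v(A_2)\ge v(A_1)$), whereas you apply the certificate-rigidity argument to the second agent $q$ as well and chain through monotonicity — both are valid, and your variant avoids the WLOG choice of which bundle to output.
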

\begin{proof}
    Let us assume that $A = (A_1, \dots, A_n)$ is an $\eefx$ allocation in instance $\I'$.
    To begin with, note that there are $n-2$ heavy items in $\I'$, and hence, by pigeonhole principle, there exists at least two agents, say $i,j \in \agents'$ such that they receive no heavy item under $A$. Without loss of generality, let us assume that $i=1$ and $j=2$, and hence we have $A_1, A_2 \subseteq \items$. This implies that we have
    \begin{align} \label{eq:heavy}
       v'(A_1)=v(A_1), v'(A_2)=v(A_2), \text{and} \ v(A_1), v(A_2) < 2v(\items)+1
    \end{align}
    Without loss of generality, let us assume $v(A_2) \geq v(A_1)$.
    
    We will prove $( A_1, \M \setminus A_1 )$ forms an $\efx$ allocation in $\I$. Note that valuations $v$ and $v'$ coincide for the bundles $A_1$ and $\M \setminus A_1$. Since $A$ is $\eefx$ in $\I'$, let us denote the $n$-certificate for agent $1$ with respect to $A_1$ by $C=(C_2, C_3, \ldots, C_n)$. First, we prove that no bundle $C_k$ with a heavy item can have any other item as well. Assume otherwise. Let $\{g,h_j\} \subseteq C_k$ for some $k \in \{2,\ldots,n\}$ and some $j \in [n-2]$ and $g \neq h_j$. Then, we have 
    \begin{align*}
        v'(C_k \setminus \{g\}) \geq v'(h_j) = 2v(M)+1 > v'(A_1)
    \end{align*}
     where, the last inequality uses equation~(\ref{eq:heavy}). This implies that agent $1$ strongly envies bundle $C_k$ which is a contradiction to our assumption that $C$ forms an $n$-certificate for bundle $A_1$ in instance $I'$. Therefore, the $n-1$ bundles in the $n$-certificate must look like $\{C_2, \ldots, C_n\} = \{\{h_1\}, \ldots, \{h_{n-2}\}, \M \setminus A_1\}$. First, note that, agent $1$ with bundle $A_1$ must not strongly envy bundle $\M \setminus A_1$ since $C$ is an $n$-certificate. And since, $\M \setminus A_1= A_2$, we already have $v(A_2)\geq v(A_1)$.   Therefore, the allocation $(A_1, \M \setminus A_1)$ forms an $\efx$ allocation in $\I$.
\end{proof}

\begin{theorem} \label{thm:exp}
    The query complexity of the $\eefx$ allocation problem with $|\items|=2k+n-1$ many items is $\Omega(\frac{1}{k} {2k+1 \choose k})$, for arbitrary number of agents $n$ with identical submodular valuations.
\end{theorem}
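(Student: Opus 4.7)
The plan is to prove Theorem~\ref{thm:exp} by a black-box reduction from $\idefx$ (two agents, identical submodular) to $\ideefx$ ($n$ agents, identical submodular), using the transformation $\I \mapsto \I'$ that was already defined in Section~\ref{sec:hardness}. Suppose, for contradiction, there is a deterministic algorithm $\mathcal{B}$ that solves $\ideefx$ on instances with $|\items'|=2k+n-1$ using $o\bigl(\tfrac{1}{k}\binom{2k+1}{k}\bigr)$ value queries. Starting from an arbitrary $\idefx$ instance $\I = ([2], \M, (v,v))$ with $|\M|=2k+1$, I would form the instance $\I' = ([n], \M', (v',\dots,v'))$ as in the reduction, so that $|\M'|=2k+n-1$. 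By Lemma~\ref{lem:submodul}, $v'$ is submodular, so $\I'$ is a valid input to $\mathcal{B}$.

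The key step is to show that any query the algorithm $\mathcal{B}$ makes to $v'$ can be answered with at most one query to $v$. First, issue a single query for $v(\M)$ and cache it; this fixes the value $\alpha := 2v(\M)+1$ that each heavy item contributes. Now, for any queried set $T \subseteq \M'$, decompose $T = T_0 \cup H_T$ where $T_0 := T \cap \M$ and $H_T := T \setminus \M$ consists of heavy items. From the definition of $v'$ (additivity on heavy items), one has $v'(T) = v(T_0) + |H_T|\cdot \alpha$, so a single query to $v(T_0)$ suffices to answer the query to $v'(T)$. Consequently, simulating $\mathcal{B}$ on $\I'$ requires at most $1 + o\bigl(\tfrac{1}{k}\binom{2k+1}{k}\bigr) = o\bigl(\tfrac{1}{k}\binom{2k+1}{k}\bigr)$ queries to $v$.

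Once $\mathcal{B}$ returns an $\eefx$ allocation $A$ of $\I'$, Lemma~\ref{lem:poly-reduction} converts $A$ into an $\efx$ allocation of $\I$ in polynomial time (which uses no further value queries). The overall procedure thus solves $\idefx$ on an instance with $|\M|=2k+1$ using $o\bigl(\tfrac{1}{k}\binom{2k+1}{k}\bigr)$ queries, contradicting Theorem~\ref{thm:PR-exp-lower}. This yields the claimed $\Omega\bigl(\tfrac{1}{k}\binom{2k+1}{k}\bigr)$ lower bound for $\ideefx$, and hence for $\eefx$ on instances with identical submodular valuations and $|\items|=2k+n-1$ items.

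The only point that needs genuine care is the query-simulation bookkeeping: one must verify that every $v'$-query reduces to a single $v$-query without blowup, which is why the upfront computation of $v(\M)$ (and hence of $\alpha$) is essential — otherwise evaluating $v'$ on bundles containing heavy items might appear to require information beyond a single $v$-query. Beyond that, the argument is a clean transfer of the Plaut--Roughgarden lower bound through the reduction already established in Lemmas~\ref{lem:submodul} and~\ref{lem:poly-reduction}.
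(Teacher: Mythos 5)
Your proposal is correct and takes essentially the same route as the paper: reduce $\idefx$ to $\ideefx$ via the heavy-item construction, invoke Lemma~\ref{lem:submodul} and Lemma~\ref{lem:poly-reduction}, and transfer the lower bound of Theorem~\ref{thm:PR-exp-lower}. The only difference is that you spell out the query-simulation bookkeeping (caching $v(\M)$ so each $v'$-query costs one $v$-query), a detail the paper's proof leaves implicit but which is indeed needed for the lower-bound transfer to be airtight.
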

\begin{proof}
    Consider any arbitrary instance $\I=([2], \M, (v,v))$ with two agents having identical submodular valuations $v$ and $|\M| = 2k+1$ items. Create the instance $\I'$ as described above. Using Lemma~\ref{lem:submodul}, $\I'$ consists of $n$ agents with identical submodular valuations. By Lemma \ref{lem:poly-reduction}, given any $\eefx$ allocation $A$, we can obtain an $\efx$ allocation for $\I$ in polynomial time. Finally, using Theorem \ref{thm:PR-exp-lower}, we know that the query complexity of finding an $\efx$ allocation in $\I$ is  $\Omega(\frac{1}{k} {2k+1 \choose k})$. Hence, the query complexity of $\eefx$ for $n$ agents with identical submodular valuations admits the same lower bound. This establishes the stated claim.
\end{proof}

Finally, our next result follows using Lemma \ref{lem:poly-reduction} and Theorem \ref{thm:cite-pls-hard}.
\begin{theorem} \label{thm:pls}
    The problem of computing an $\eefx$ allocation for arbitrary number of agents with identical submodular valuations is $\pls$-hard.
\end{theorem}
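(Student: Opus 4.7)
The plan is to invoke the reduction already constructed above and observe that it is a polynomial-time many-one reduction, so it transfers $\pls$-hardness from $\idefx$ to $\ideefx$. Concretely, I would first recall that a $\pls$-reduction from problem $A$ to problem $B$ consists of two polynomial-time computable maps: an instance map $\phi$ sending instances of $A$ to instances of $B$, and a solution map $\psi$ sending (locally optimal) solutions of $\phi(x)$ back to (locally optimal) solutions of $x$. For us, $A = \idefx$, $B = \ideefx$, and the ``locally optimal'' solutions are precisely $\efx$ and $\eefx$ allocations respectively (since, for identical submodular valuations, $\efx$/$\eefx$ allocations coincide with local optima of the natural potential function used in Theorem~\ref{thm:cite-pls-hard} and in~\cite{plaut2020almost}).

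Next, I would define $\phi$ to be the instance map already written down in Section~\ref{sec:hardness}: given $\I = ([2], \M, (v,v))$, output $\I' = ([n], \M', (v', \dots, v'))$ with $\M' = \M \cup \{h_1, \dots, h_{n-2}\}$ and $v'$ defined as in the excerpt. This map is clearly polynomial-time computable (the values of $v'$ can be answered by a constant number of oracle calls to $v$, plus arithmetic on $v(\items)$), and by Lemma~\ref{lem:submodul} $v'$ is submodular, so $\phi(\I)$ is a valid instance of $\ideefx$. I would then take $\psi$ to be exactly the solution map from Lemma~\ref{lem:poly-reduction}: given an $\eefx$ allocation $A = (A_1, \dots, A_n)$ in $\I'$, select (by pigeonhole) an agent with no heavy item, rename so it is agent $1$, and return $(A_1, \M \setminus A_1)$. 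Lemma~\ref{lem:poly-reduction} guarantees this is an $\efx$ allocation in $\I$, and its construction is obviously polynomial-time.

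With $(\phi, \psi)$ in hand, the conclusion is immediate: if $\ideefx$ admitted a polynomial-time algorithm, then composing with $\phi$ and $\psi$ would give a polynomial-time algorithm for $\idefx$, contradicting Theorem~\ref{thm:cite-pls-hard}; more strongly, the pair $(\phi, \psi)$ is a $\pls$-reduction from $\idefx$ to $\ideefx$, so $\pls$-hardness of $\idefx$ transfers to $\ideefx$. I expect the only subtlety, and hence the main (minor) obstacle, to be being precise about the ``solution spaces'' on both sides—i.e., making sure that the notion of $\pls$-hardness in Theorem~\ref{thm:cite-pls-hard} is defined with respect to the same local-search formulation that we are reducing to. Since Lemma~\ref{lem:poly-reduction} gives an unconditional mapping from any $\eefx$ allocation to an $\efx$ allocation (not just from local optima of a particular potential), this issue dissolves: the reduction is in fact a polynomial-time many-one reduction at the decision/search level, which is sufficient to preserve $\pls$-hardness. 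The final writeup should therefore be just a couple of sentences citing Lemmas~\ref{lem:submodul}~and~\ref{lem:poly-reduction} together with Theorem~\ref{thm:cite-pls-hard}.
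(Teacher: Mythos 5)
Your proposal is correct and follows essentially the same route as the paper, whose proof of Theorem~\ref{thm:pls} is exactly the observation that the instance map of Section~\ref{sec:hardness} (with Lemma~\ref{lem:submodul}) together with the solution map of Lemma~\ref{lem:poly-reduction} constitutes a polynomial-time reduction from $\idefx$ to $\ideefx$, so the $\pls$-hardness in Theorem~\ref{thm:cite-pls-hard} transfers. One small caution: your intermediate remark that a polynomial-time algorithm for $\ideefx$ would ``contradict'' Theorem~\ref{thm:cite-pls-hard} is not quite right as stated (a $\pls$-hard problem being in \PC would only collapse $\pls$ into \PC, not yield an outright contradiction), but your stronger and final formulation---that $(\phi,\psi)$ is a reduction preserving $\pls$-hardness---is the correct argument and matches the paper.
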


Since our reduction work even for three agents, Theorems~\ref{thm:exp} and \ref{thm:pls} hold true for the problem of computing $\eefx$ allocations even for three agents with identical submodular valuations. Note that the set of $\efx$ and $\eefx$ allocations coincide for the case of two agents and hence it inherits the same computational hardness guarantees as that of $\efx$ here.

\section{Conclusion and Open Problems} \label{sec:conc}

In this work, we establish the existence of $\eefx$ allocations for an arbitrary number of agents with general monotone valuations. To the best of our knowledge, this is the strongest existential guarantee known for any relaxation of $\efx$ in discrete fair division. Furthermore, we also prove that the problem of computing an $\eefx$ allocation for instances with an arbitrary number of agents with submodular valuations is $\pls$-hard and requires an exponential number of valuations queries as well. Our existential result of $\eefx$ allocations for monotone valuations has opened a variety of major problems in discrete fair division. We list three of them here, that we believe should be explored first.

The first interesting question is, for submodular or monotone valuations, explore the possibility of a PTAS for computing an $\eefx$ allocation, or otherwise prove its $\apx$-hardness. An equally exciting problem would be to explore the compatibility of $\eefx$ and $\ef$1 allocations. Even for instances with additive valuations, does there always exist an allocation that is simultaneously both $\eefx$ and $\ef$1? If yes, can we compute it? What about similar compatibility question of $\eefx$ with \emph{Nash social welfare}? We know that a \emph{maximum Nash welfare} (MNW) allocation is both $\ef$1 and \emph{Pareto-optimal} \cite{caragiannis2016unreasonable}. What kind of a relation\footnote{We know that MNW allocations may not be $\eefx$. Consider the following example with two agents and three items $\{a,b,c\}$. Agent $1$ values item $a$ at $10$, item $b$ at $1$, and item $c$ at $\varepsilon>0$. Agent $2$ values item $a$ at $10$, item $b$ at $\varepsilon$, and item $c$ at $1$. Here, the unique MNW allocation (that assigns items $a$ and $b$ to agent $1$, and item $c$ to agent $2$) is not $\eefx$.} exist between $\eefx$ and MNW allocations?

\bibliographystyle{plainnat}
\bibliography{references}
\newpage
\appendix

\section{Polynomial Local Search ($\pls$)}\label{app:pls}

The following description of the complexity class $\pls$ is taken from Section 7.2 in \cite{ABR22}.

The class $\pls$ (Polynomial Local Search) was defined by~\citet{johnson1988easy} to capture the complexity of finding local optima of optimization problems. Here, a generic instance $\I$ of an optimization problem has a corresponding finite set of solutions $S(\I)$ and a potential $c(s)$ associated with each solution $s \in S(\I)$. The objective is to find a solution that maximizes (or minimizes) this potential. In the local search version of the problem, each solution $s \in S(\I)$ additionally has a well-defined neighborhood $N(s) \in 2^{S(\I)}$ and the objective is to find a local optimum, i.e., a solution $s\in S(\I)$ such that no solution in its neighborhood $N(s)$ has a higher potential.

\begin{definition}[$\pls$]
	\label{definition:pls}
	Consider an optimization problem $\mathcal{X}$, and for all input instances $\I$ of $\mathcal{X}$ let $S(\mathcal{I})$ denote the finite set of feasible solutions for this instance, $N(s)$ be the neighborhood of a solution $s \in S(\I)$,  and $c(s)$ be the potential of solution $s$. The desired output is a local optimum with respect to the potential function. 
	
	Specifically, $\mathcal{X}$ is a polynomial local search problem (i.e., $\mathcal{X} \in \pls$) if all solutions are bounded in the size of the input $\mathcal{I}$ and there exists polynomial-time algorithms $\mathcal{A}_1$, $\mathcal{A}_2$, and $\mathcal{A}_3$ such that:
	\begin{enumerate}
		\item $\mathcal{A}_1$ tests whether the input $\mathcal{I}$ is a legitimate instance of $\mathcal{X}$ and if yes, outputs a solution $s_{\text{initial}} \in S(\mathcal{I})$.
		\item $\mathcal{A}_2$ takes as input instance $\mathcal{I}$ and candidate solution $s$, tests if $s \in S(\mathcal{I})$ and if yes, computes $c(s)$.
		\item $\mathcal{A}_3$ takes as input instance $\mathcal{I}$ and candidate solution $s$, tests if $s$ is a local optimum and if not, outputs $s' \in N(s)$ such that $c(s') > c(s)$ (the inequality is reversed for the minimization version).
	\end{enumerate}
\end{definition}

Each $\pls$ problem comes with an associated local search algorithm that is implicitly described by the three algorithms mentioned above. The first algorithm is used to find an initial solution to the problem and the third algorithm is iteratively used to find a potential-improving neighbor until a local optimum is reached.

\end{document}